\def\BibTeX{{\rm B\kern-.05em{\sc i\kern-.025em b}\kern-.08em
    T\kern-.1667em\lower.7ex\hbox{E}\kern-.125emX}}
\newtheorem{theorem}{Theorem}[section]
\newtheorem{lemma}{Lemma}[section] 
\theoremstyle{definition}
\newtheorem{definition}{Definition}[section]
\pgfplotsset{compat=1.7}
\begin{document}
%
\title{Sharding-Based Proof-of-Stake Blockchain Protocols:
Security Analysis}
%
%
%
%

\author{Abdelatif~Hafid,~\IEEEmembership{}
        Abdelhakim~Senhaji~Hafid,~\IEEEmembership{}
        Adil~Senhaji,~\IEEEmembership{}
\IEEEcompsocitemizethanks{\IEEEcompsocthanksitem A. Hafid and A. S. Hafid are with Montreal Blockchain Laboratory (mbl), Department of Computer Science and Operational Research, University of Montreal, Montreal, QC H3T 1J4, Canada.\protect\\
E-mail: abdelatif.hafid@umontreal.ca
\IEEEcompsocthanksitem A. Senhaji  is a Managing Director of technology at Mizuho Securities, USA.
}

\thanks{}}

%
%

\markboth{}%
{Hafid \MakeLowercase{\textit{et al.}}: Sharded Blockchain} 
%



\IEEEtitleabstractindextext{%
\begin{abstract}
Blockchain technology has been gaining great interest from a variety of sectors, including healthcare, supply chain and cryptocurrencies. However, Blockchain suffers from its limited ability to scale (i.e. low throughput and high latency). Several solutions have been appeared to tackle this issue. In particular, sharding proved that it is one of the most promising solutions to Blockchain scalability. Sharding can be divided into two major categories: (1) Sharding-based Proof-of-Work (PoW) Blockchain protocols, and (2) Sharding-based Proof-of-Stake (PoS) Blockchain protocols. The two categories  achieve a good performances (i.e. good throughput with a reasonable latency), but raise security issues. This article attends that analyze the security of the second category. More specifically, we compute the probability of committing a faulty block and measure the security by computing the number of years to fail. Finally, to show the effectiveness of the proposed model, we conduct a numerical analysis and evaluate the results obtained.
\end{abstract}
\begin{IEEEkeywords}
blockchain scalability, sharding, security analysis, Proof-of-Stake, practical Byzantine fault tolerance
\end{IEEEkeywords}}

\maketitle

\IEEEdisplaynontitleabstractindextext

%
\IEEEpeerreviewmaketitle
\IEEEraisesectionheading{\section{Introduction}\label{sec:introduction}} With the raise of Bitcoin \cite{nakamoto2008bitcoin}, Blockchain has attracted significant attention and extensive research. More specifically, it has been adopted by numerous and several applications, such as healthcare \cite{HC1}, supply chain \cite{SC1}, and Internet-of-Things \cite{IoT1}. However, Blockchain's capacity to scale is very limited \cite{hafid2020scaling}. For example, in the case of cryptocurrencies, Bitcoin \cite{nakamoto2008bitcoin} handles between 3-7 transactions per second (tps), which is very limited compared to traditional payment systems (e.g. PayPal \cite{paypal}). To deal with this issues, many solutions appeared \cite{hafid2020scaling}. In particular, sharding is emerged as a promising solution \cite{hafid2020scaling}. Sharding consists of partitioning the network into shanks, called shards; all shards work in parallel to enhance the performance of the network and then mitigate the scalability issues. More specifically, each shard handles a sub-set of transactions instead of the entire network handles all the transactions. However, the security of sharding-based blockchain protocols is emerging as a challenging issue. More specifically, in sharding-based blockchain environment, it is easy for a malicious user to conquer and attack a single shard compared to the whole network. This attack is well-known as a shard takeovers attack \cite{hafid2021tractable}.

Malicious nodes (e.g. Sybil nodes) are increasingly appear with the growing and the spread of Blockchain technology. In contract, several consensus mechanisms appeared to deal with these Byzantine nodes, including, PoW, PoS, , and practical Byzantine Fault Tolerance (pBFT).

pBFT is an algorithm that tolerates Byzantine fault \cite{castro1999practical}; this algorithm belongs to the BFT class. Byzantine Fault Tolerance (BFT) is the feature of a distributed network to reach consensus (agreement on the same value) even when some of nodes in the network fail to respond or respond with incorrect information. Leslie Lamport \cite{lamport2019byzantine} proved that if we have $3m+1$ correctly working processors, a consensus (agreement on same value) can be reached if at most $m$ processors are faulty, which means strictly more than two-thirds of the total number of processors should be honest.

PoS is an alternative consensus mechanism of PoW. It consists of selecting validators in proportion to their number of coins. Validators are responsible of adding new blocks to Blockchain. 

By going through the literature \cite{wang2019sok, hafid2020scaling}, we can classify sharding-based Blockchain protocols into two major categories: sharding-based PoW and sharding-based PoS Blockchain protocols.

Recently, Hafid et al. \cite{hafid2019new, hafid2021tractable, hafid2020novel} proposed many approaches and models to analyze the security of the first category. However, there is a lack of methods in the literature to analyze the second category. In this paper, we focus on the second category. Precisely, we provide a probabilistic model to analyze the security of this kind of sharding-based Blockchain protocols by computing the probability of committing a faulty block. Furthermore, based on these probabilities, we calculate the number of years to fail for the purpose of quantifying and measuring the security of the network.  

The rest of the paper is structured as follows. Section \ref{sec: Mathematical Model} presents an overview of sharding-based PoS Blockchain protocols and presents the proposed probabilistic model. Section \ref{sec: Results and Evaluation} presents numerical results and evaluates the proposed model. In the end, we conclude this paper in Section \ref{sec: Conclusion}.
\section{Mathematical Model} \label{sec: Mathematical Model}
The objective of this section is to propose a probabilistic model to analyze the security of sharding-based PoS Blockchain protocols.

\subsection{Abbreviations and Definitions} \label{AA}
In what follows, we provide abbreviations and definitions required for the rest of the paper. Table $\ref{table:1}$ shows the list of symbols and variables that are used to describe the proposed model. 

\begin{table}[ht]
\caption{Notations \& Symbols}
\label{table:1}
\setlength{\tabcolsep}{3pt}
\begin{tabular}{|p{65pt}|p{175pt}|}
\hline
Notation & 
Description \\
\hline
$N$&
Number of users\\
$n$ &
Size of a shard's committee\\
$n'$ &
Size of the beacon's committee\\
$H$ &
Number of honest validators in a shard\\
$M$ &
Number of malicious validators in a shard\\
$V$ &
Number of validators in a shard ($V = H + M$) \\
$\zeta$ &
Number of committees \\
$X$ &
Random variable that computes the number of malicious nodes in the committee of a shard \\
$H'$ &
Number of honest validators in the beacon chain\\
$M'$ &
Number of malicious validators in the beacon chain\\
$V'$ &
Number of validators in the beacon chain ($V' = H' + M' $) \\
$X'$ &
Random variable that computes the number of malicious nodes in the committee of the beacon chain \\
$r$ &
Committee/beacon resiliency\\
$R$ &
Shard resiliency\\
$R'$ &
Beacon resiliency\\
$\mathcal{P}$ &
Probability of conquering the protocol \\
$P$ &
Probability of a shard to commit a faulty block \\
$P^{'}$ &
Probability of the beacon chain to commit a faulty block \\
$P^{''}$ &
Probability of all shards committing a faulty block\\
$p_m$ &
Percentage of malicious validators in a shard chain as well as in the beacon chain\\
$Y_f$ &
Years to fail\\
\hline
\end{tabular}
\end{table}

\begin{definition}[Shard's committee]
Shard's committee is a subset of validators selected randomly from the set of validators that are decided to stake for the shard chains.
\end{definition}

\begin{definition}[Beacon's committee]
Beacon's committee is a subset of validators selected randomly from the set of validators that are decided to stake for the beacon chain.
\end{definition}

\subsection{Probabilistic Model} \label{PP}
In this section, we present the proposed probabilistic model. Figure \ref{fig:sharding_scheme} shows a network of sharding-based PoS Blockchain protocol. The network contains a single beacon chain and $\zeta$ shard chains. Shard chains forge/produce blocks in parallel. All shard chains are synchronized by beacon chain. More specifically, each shard has its own committee, which is randomly assigned by the beacon chain. And each shard committee processes transactions belonging to it. When a shard block is created, the beacon committee verify the block, if valid, it adds the block header to the beacon chain. Otherwise, it drops it and sends the proof to other shards to vote for the purpose of slashing the misbehaving shard committee. Furthermore, in each epoch (a period of time), beacon chain shuffles committees to ensure the security. For Incognito \cite{incognito}, when a new random number is generated, the beacon chain shuffles committees; this means that, one epoch for Incognito is corresponding to generating a new random number. This number is generated periodically in a round-robin fashion \cite{incognito}, \cite{rasmussen2008round}.

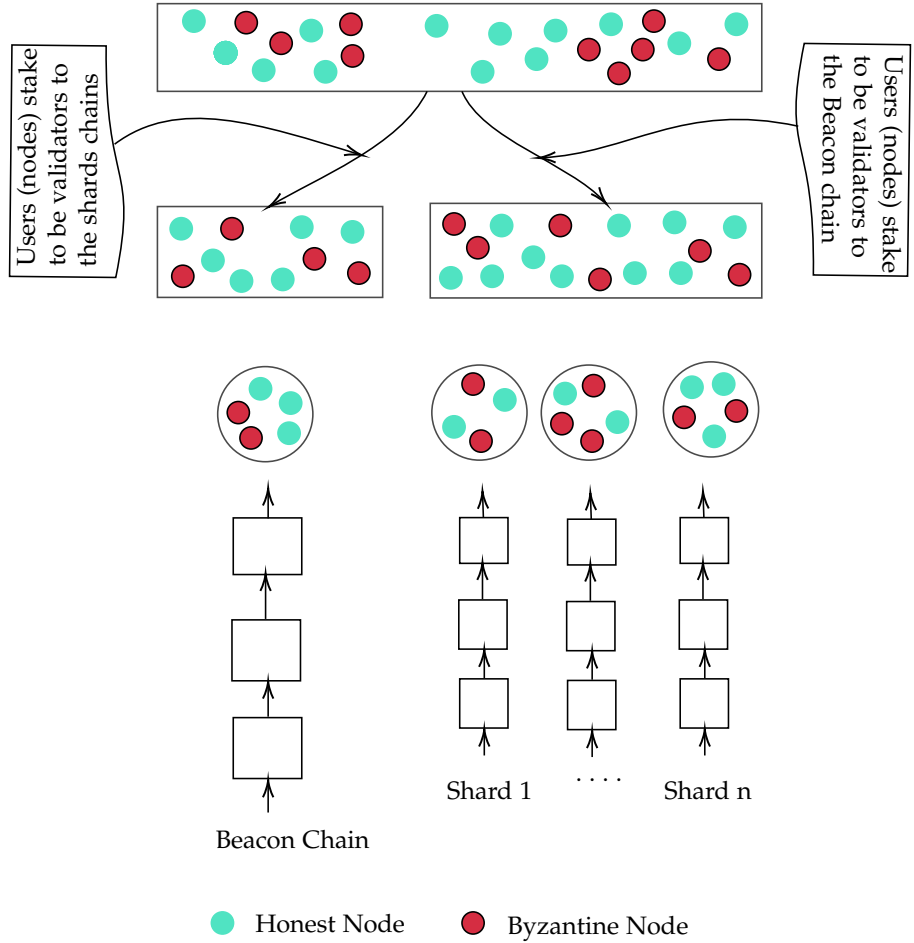
\begin{figure*}[ht]
\centering
\tikzset{every picture/.style={line width= 0.6pt}} 
\begin{tikzpicture}[x= 0.6pt,y= 0.6pt,yscale =-1, xscale = 1]
\draw  [color={rgb, 255:red, 74; green, 74; blue, 74 }  ,draw opacity=1 ] (130,13) -- (511,13) -- (511,68.33) -- (130,68.33) -- cycle ;
\draw  [color={rgb, 255:red, 74; green, 74; blue, 74 }  ,draw opacity=1 ] (130,141) -- (272,141) -- (272,198.33) -- (130,198.33) -- cycle ;
\draw  [color={rgb, 255:red, 74; green, 74; blue, 74 }  ,draw opacity=1 ] (302,139) -- (511,139) -- (511,198.67) -- (302,198.67) -- cycle ;
\draw  [color={rgb, 255:red, 80; green, 227; blue, 194 }  ,draw opacity=1 ][fill={rgb, 255:red, 80; green, 227; blue, 194 }  ,fill opacity=1 ] (146,24) .. controls (146,20.13) and (149.13,17) .. (153,17) .. controls (156.87,17) and (160,20.13) .. (160,24) .. controls (160,27.87) and (156.87,31) .. (153,31) .. controls (149.13,31) and (146,27.87) .. (146,24) -- cycle ;
\draw  [color={rgb, 255:red, 0; green, 0; blue, 0 }  ,draw opacity=1 ][fill={rgb, 255:red, 213; green, 45; blue, 66 }  ,fill opacity=1 ] (179,25) .. controls (179,21.13) and (182.13,18) .. (186,18) .. controls (189.87,18) and (193,21.13) .. (193,25) .. controls (193,28.87) and (189.87,32) .. (186,32) .. controls (182.13,32) and (179,28.87) .. (179,25) -- cycle ;
\draw  [color={rgb, 255:red, 80; green, 227; blue, 194 }  ,draw opacity=1 ][fill={rgb, 255:red, 80; green, 227; blue, 194 }  ,fill opacity=1 ] (166,44) .. controls (166,40.13) and (169.13,37) .. (173,37) .. controls (176.87,37) and (180,40.13) .. (180,44) .. controls (180,47.87) and (176.87,51) .. (173,51) .. controls (169.13,51) and (166,47.87) .. (166,44) -- cycle ;
\draw  [color={rgb, 255:red, 80; green, 227; blue, 194 }  ,draw opacity=1 ][fill={rgb, 255:red, 80; green, 227; blue, 194 }  ,fill opacity=1 ] (166,44) .. controls (166,40.13) and (169.13,37) .. (173,37) .. controls (176.87,37) and (180,40.13) .. (180,44) .. controls (180,47.87) and (176.87,51) .. (173,51) .. controls (169.13,51) and (166,47.87) .. (166,44) -- cycle ;
\draw  [color={rgb, 255:red, 80; green, 227; blue, 194 }  ,draw opacity=1 ][fill={rgb, 255:red, 80; green, 227; blue, 194 }  ,fill opacity=1 ] (166,44) .. controls (166,40.13) and (169.13,37) .. (173,37) .. controls (176.87,37) and (180,40.13) .. (180,44) .. controls (180,47.87) and (176.87,51) .. (173,51) .. controls (169.13,51) and (166,47.87) .. (166,44) -- cycle ;
\draw  [color={rgb, 255:red, 80; green, 227; blue, 194 }  ,draw opacity=1 ][fill={rgb, 255:red, 80; green, 227; blue, 194 }  ,fill opacity=1 ] (166,44) .. controls (166,40.13) and (169.13,37) .. (173,37) .. controls (176.87,37) and (180,40.13) .. (180,44) .. controls (180,47.87) and (176.87,51) .. (173,51) .. controls (169.13,51) and (166,47.87) .. (166,44) -- cycle ;
\draw  [color={rgb, 255:red, 80; green, 227; blue, 194 }  ,draw opacity=1 ][fill={rgb, 255:red, 80; green, 227; blue, 194 }  ,fill opacity=1 ] (166,44) .. controls (166,40.13) and (169.13,37) .. (173,37) .. controls (176.87,37) and (180,40.13) .. (180,44) .. controls (180,47.87) and (176.87,51) .. (173,51) .. controls (169.13,51) and (166,47.87) .. (166,44) -- cycle ;
\draw  [color={rgb, 255:red, 80; green, 227; blue, 194 }  ,draw opacity=1 ][fill={rgb, 255:red, 80; green, 227; blue, 194 }  ,fill opacity=1 ] (166,44) .. controls (166,40.13) and (169.13,37) .. (173,37) .. controls (176.87,37) and (180,40.13) .. (180,44) .. controls (180,47.87) and (176.87,51) .. (173,51) .. controls (169.13,51) and (166,47.87) .. (166,44) -- cycle ;
\draw  [color={rgb, 255:red, 80; green, 227; blue, 194 }  ,draw opacity=1 ][fill={rgb, 255:red, 80; green, 227; blue, 194 }  ,fill opacity=1 ] (166,44) .. controls (166,40.13) and (169.13,37) .. (173,37) .. controls (176.87,37) and (180,40.13) .. (180,44) .. controls (180,47.87) and (176.87,51) .. (173,51) .. controls (169.13,51) and (166,47.87) .. (166,44) -- cycle ;
\draw  [color={rgb, 255:red, 80; green, 227; blue, 194 }  ,draw opacity=1 ][fill={rgb, 255:red, 80; green, 227; blue, 194 }  ,fill opacity=1 ] (166,44) .. controls (166,40.13) and (169.13,37) .. (173,37) .. controls (176.87,37) and (180,40.13) .. (180,44) .. controls (180,47.87) and (176.87,51) .. (173,51) .. controls (169.13,51) and (166,47.87) .. (166,44) -- cycle ;
\draw  [color={rgb, 255:red, 80; green, 227; blue, 194 }  ,draw opacity=1 ][fill={rgb, 255:red, 80; green, 227; blue, 194 }  ,fill opacity=1 ] (220,30) .. controls (220,26.13) and (223.13,23) .. (227,23) .. controls (230.87,23) and (234,26.13) .. (234,30) .. controls (234,33.87) and (230.87,37) .. (227,37) .. controls (223.13,37) and (220,33.87) .. (220,30) -- cycle ;
\draw  [color={rgb, 255:red, 80; green, 227; blue, 194 }  ,draw opacity=1 ][fill={rgb, 255:red, 80; green, 227; blue, 194 }  ,fill opacity=1 ] (190,55) .. controls (190,51.13) and (193.13,48) .. (197,48) .. controls (200.87,48) and (204,51.13) .. (204,55) .. controls (204,58.87) and (200.87,62) .. (197,62) .. controls (193.13,62) and (190,58.87) .. (190,55) -- cycle ;
\draw  [color={rgb, 255:red, 80; green, 227; blue, 194 }  ,draw opacity=1 ][fill={rgb, 255:red, 80; green, 227; blue, 194 }  ,fill opacity=1 ] (488,26) .. controls (488,22.13) and (491.13,19) .. (495,19) .. controls (498.87,19) and (502,22.13) .. (502,26) .. controls (502,29.87) and (498.87,33) .. (495,33) .. controls (491.13,33) and (488,29.87) .. (488,26) -- cycle ;
\draw  [color={rgb, 255:red, 80; green, 227; blue, 194 }  ,draw opacity=1 ][fill={rgb, 255:red, 80; green, 227; blue, 194 }  ,fill opacity=1 ] (409,26) .. controls (409,22.13) and (412.13,19) .. (416,19) .. controls (419.87,19) and (423,22.13) .. (423,26) .. controls (423,29.87) and (419.87,33) .. (416,33) .. controls (412.13,33) and (409,29.87) .. (409,26) -- cycle ;
\draw  [color={rgb, 255:red, 80; green, 227; blue, 194 }  ,draw opacity=1 ][fill={rgb, 255:red, 80; green, 227; blue, 194 }  ,fill opacity=1 ] (374,30) .. controls (374,26.13) and (377.13,23) .. (381,23) .. controls (384.87,23) and (388,26.13) .. (388,30) .. controls (388,33.87) and (384.87,37) .. (381,37) .. controls (377.13,37) and (374,33.87) .. (374,30) -- cycle ;
\draw  [color={rgb, 255:red, 80; green, 227; blue, 194 }  ,draw opacity=1 ][fill={rgb, 255:red, 80; green, 227; blue, 194 }  ,fill opacity=1 ] (452,38) .. controls (452,34.13) and (455.13,31) .. (459,31) .. controls (462.87,31) and (466,34.13) .. (466,38) .. controls (466,41.87) and (462.87,45) .. (459,45) .. controls (455.13,45) and (452,41.87) .. (452,38) -- cycle ;
\draw  [color={rgb, 255:red, 80; green, 227; blue, 194 }  ,draw opacity=1 ][fill={rgb, 255:red, 80; green, 227; blue, 194 }  ,fill opacity=1 ] (362,48) .. controls (362,44.13) and (365.13,41) .. (369,41) .. controls (372.87,41) and (376,44.13) .. (376,48) .. controls (376,51.87) and (372.87,55) .. (369,55) .. controls (365.13,55) and (362,51.87) .. (362,48) -- cycle ;
\draw  [color={rgb, 255:red, 80; green, 227; blue, 194 }  ,draw opacity=1 ][fill={rgb, 255:red, 80; green, 227; blue, 194 }  ,fill opacity=1 ] (328,56) .. controls (328,52.13) and (331.13,49) .. (335,49) .. controls (338.87,49) and (342,52.13) .. (342,56) .. controls (342,59.87) and (338.87,63) .. (335,63) .. controls (331.13,63) and (328,59.87) .. (328,56) -- cycle ;
\draw  [color={rgb, 255:red, 80; green, 227; blue, 194 }  ,draw opacity=1 ][fill={rgb, 255:red, 80; green, 227; blue, 194 }  ,fill opacity=1 ] (339,32) .. controls (339,28.13) and (342.13,25) .. (346,25) .. controls (349.87,25) and (353,28.13) .. (353,32) .. controls (353,35.87) and (349.87,39) .. (346,39) .. controls (342.13,39) and (339,35.87) .. (339,32) -- cycle ;
\draw  [color={rgb, 255:red, 80; green, 227; blue, 194 }  ,draw opacity=1 ][fill={rgb, 255:red, 80; green, 227; blue, 194 }  ,fill opacity=1 ] (299,27) .. controls (299,23.13) and (302.13,20) .. (306,20) .. controls (309.87,20) and (313,23.13) .. (313,27) .. controls (313,30.87) and (309.87,34) .. (306,34) .. controls (302.13,34) and (299,30.87) .. (299,27) -- cycle ;
\draw  [color={rgb, 255:red, 80; green, 227; blue, 194 }  ,draw opacity=1 ][fill={rgb, 255:red, 80; green, 227; blue, 194 }  ,fill opacity=1 ] (229,56) .. controls (229,52.13) and (232.13,49) .. (236,49) .. controls (239.87,49) and (243,52.13) .. (243,56) .. controls (243,59.87) and (239.87,63) .. (236,63) .. controls (232.13,63) and (229,59.87) .. (229,56) -- cycle ;
\draw  [color={rgb, 255:red, 0; green, 0; blue, 0 }  ,draw opacity=1 ][fill={rgb, 255:red, 213; green, 45; blue, 66 }  ,fill opacity=1 ] (245,26) .. controls (245,22.13) and (248.13,19) .. (252,19) .. controls (255.87,19) and (259,22.13) .. (259,26) .. controls (259,29.87) and (255.87,33) .. (252,33) .. controls (248.13,33) and (245,29.87) .. (245,26) -- cycle ;
\draw  [color={rgb, 255:red, 0; green, 0; blue, 0 }  ,draw opacity=1 ][fill={rgb, 255:red, 213; green, 45; blue, 66 }  ,fill opacity=1 ] (201,38) .. controls (201,34.13) and (204.13,31) .. (208,31) .. controls (211.87,31) and (215,34.13) .. (215,38) .. controls (215,41.87) and (211.87,45) .. (208,45) .. controls (204.13,45) and (201,41.87) .. (201,38) -- cycle ;
\draw  [color={rgb, 255:red, 0; green, 0; blue, 0 }  ,draw opacity=1 ][fill={rgb, 255:red, 213; green, 45; blue, 66 }  ,fill opacity=1 ] (246,46) .. controls (246,42.13) and (249.13,39) .. (253,39) .. controls (256.87,39) and (260,42.13) .. (260,46) .. controls (260,49.87) and (256.87,53) .. (253,53) .. controls (249.13,53) and (246,49.87) .. (246,46) -- cycle ;
\draw  [color={rgb, 255:red, 0; green, 0; blue, 0 }  ,draw opacity=1 ][fill={rgb, 255:red, 213; green, 45; blue, 66 }  ,fill opacity=1 ] (395,42) .. controls (395,38.13) and (398.13,35) .. (402,35) .. controls (405.87,35) and (409,38.13) .. (409,42) .. controls (409,45.87) and (405.87,49) .. (402,49) .. controls (398.13,49) and (395,45.87) .. (395,42) -- cycle ;
\draw  [color={rgb, 255:red, 0; green, 0; blue, 0 }  ,draw opacity=1 ][fill={rgb, 255:red, 213; green, 45; blue, 66 }  ,fill opacity=1 ] (429,42) .. controls (429,38.13) and (432.13,35) .. (436,35) .. controls (439.87,35) and (443,38.13) .. (443,42) .. controls (443,45.87) and (439.87,49) .. (436,49) .. controls (432.13,49) and (429,45.87) .. (429,42) -- cycle ;
\draw  [color={rgb, 255:red, 0; green, 0; blue, 0 }  ,draw opacity=1 ][fill={rgb, 255:red, 213; green, 45; blue, 66 }  ,fill opacity=1 ] (477,48) .. controls (477,44.13) and (480.13,41) .. (484,41) .. controls (487.87,41) and (491,44.13) .. (491,48) .. controls (491,51.87) and (487.87,55) .. (484,55) .. controls (480.13,55) and (477,51.87) .. (477,48) -- cycle ;
\draw  [color={rgb, 255:red, 0; green, 0; blue, 0 }  ,draw opacity=1 ][fill={rgb, 255:red, 213; green, 45; blue, 66 }  ,fill opacity=1 ] (414,57) .. controls (414,53.13) and (417.13,50) .. (421,50) .. controls (424.87,50) and (428,53.13) .. (428,57) .. controls (428,60.87) and (424.87,64) .. (421,64) .. controls (417.13,64) and (414,60.87) .. (414,57) -- cycle ;
\draw  [color={rgb, 255:red, 0; green, 0; blue, 0 }  ,draw opacity=1 ][fill={rgb, 255:red, 213; green, 45; blue, 66 }  ,fill opacity=1 ] (436,24) .. controls (436,20.13) and (439.13,17) .. (443,17) .. controls (446.87,17) and (450,20.13) .. (450,24) .. controls (450,27.87) and (446.87,31) .. (443,31) .. controls (439.13,31) and (436,27.87) .. (436,24) -- cycle ;
\draw  [color={rgb, 255:red, 80; green, 227; blue, 194 }  ,draw opacity=1 ][fill={rgb, 255:red, 80; green, 227; blue, 194 }  ,fill opacity=1 ] (138,155) .. controls (138,151.13) and (141.13,148) .. (145,148) .. controls (148.87,148) and (152,151.13) .. (152,155) .. controls (152,158.87) and (148.87,162) .. (145,162) .. controls (141.13,162) and (138,158.87) .. (138,155) -- cycle ;
\draw  [color={rgb, 255:red, 80; green, 227; blue, 194 }  ,draw opacity=1 ][fill={rgb, 255:red, 80; green, 227; blue, 194 }  ,fill opacity=1 ] (158,175) .. controls (158,171.13) and (161.13,168) .. (165,168) .. controls (168.87,168) and (172,171.13) .. (172,175) .. controls (172,178.87) and (168.87,182) .. (165,182) .. controls (161.13,182) and (158,178.87) .. (158,175) -- cycle ;
\draw  [color={rgb, 255:red, 80; green, 227; blue, 194 }  ,draw opacity=1 ][fill={rgb, 255:red, 80; green, 227; blue, 194 }  ,fill opacity=1 ] (246,157) .. controls (246,153.13) and (249.13,150) .. (253,150) .. controls (256.87,150) and (260,153.13) .. (260,157) .. controls (260,160.87) and (256.87,164) .. (253,164) .. controls (249.13,164) and (246,160.87) .. (246,157) -- cycle ;
\draw  [color={rgb, 255:red, 80; green, 227; blue, 194 }  ,draw opacity=1 ][fill={rgb, 255:red, 80; green, 227; blue, 194 }  ,fill opacity=1 ] (214,154) .. controls (214,150.13) and (217.13,147) .. (221,147) .. controls (224.87,147) and (228,150.13) .. (228,154) .. controls (228,157.87) and (224.87,161) .. (221,161) .. controls (217.13,161) and (214,157.87) .. (214,154) -- cycle ;
\draw  [color={rgb, 255:red, 80; green, 227; blue, 194 }  ,draw opacity=1 ][fill={rgb, 255:red, 80; green, 227; blue, 194 }  ,fill opacity=1 ] (202,187) .. controls (202,183.13) and (205.13,180) .. (209,180) .. controls (212.87,180) and (216,183.13) .. (216,187) .. controls (216,190.87) and (212.87,194) .. (209,194) .. controls (205.13,194) and (202,190.87) .. (202,187) -- cycle ;
\draw  [color={rgb, 255:red, 80; green, 227; blue, 194 }  ,draw opacity=1 ][fill={rgb, 255:red, 80; green, 227; blue, 194 }  ,fill opacity=1 ] (340,153) .. controls (340,149.13) and (343.13,146) .. (347,146) .. controls (350.87,146) and (354,149.13) .. (354,153) .. controls (354,156.87) and (350.87,160) .. (347,160) .. controls (343.13,160) and (340,156.87) .. (340,153) -- cycle ;
\draw  [color={rgb, 255:red, 80; green, 227; blue, 194 }  ,draw opacity=1 ][fill={rgb, 255:red, 80; green, 227; blue, 194 }  ,fill opacity=1 ] (414,153) .. controls (414,149.13) and (417.13,146) .. (421,146) .. controls (424.87,146) and (428,149.13) .. (428,153) .. controls (428,156.87) and (424.87,160) .. (421,160) .. controls (417.13,160) and (414,156.87) .. (414,153) -- cycle ;
\draw  [color={rgb, 255:red, 80; green, 227; blue, 194 }  ,draw opacity=1 ][fill={rgb, 255:red, 80; green, 227; blue, 194 }  ,fill opacity=1 ] (360,173) .. controls (360,169.13) and (363.13,166) .. (367,166) .. controls (370.87,166) and (374,169.13) .. (374,173) .. controls (374,176.87) and (370.87,180) .. (367,180) .. controls (363.13,180) and (360,176.87) .. (360,173) -- cycle ;
\draw  [color={rgb, 255:red, 80; green, 227; blue, 194 }  ,draw opacity=1 ][fill={rgb, 255:red, 80; green, 227; blue, 194 }  ,fill opacity=1 ] (424,183) .. controls (424,179.13) and (427.13,176) .. (431,176) .. controls (434.87,176) and (438,179.13) .. (438,183) .. controls (438,186.87) and (434.87,190) .. (431,190) .. controls (427.13,190) and (424,186.87) .. (424,183) -- cycle ;
\draw  [color={rgb, 255:red, 80; green, 227; blue, 194 }  ,draw opacity=1 ][fill={rgb, 255:red, 80; green, 227; blue, 194 }  ,fill opacity=1 ] (487,154) .. controls (487,150.13) and (490.13,147) .. (494,147) .. controls (497.87,147) and (501,150.13) .. (501,154) .. controls (501,157.87) and (497.87,161) .. (494,161) .. controls (490.13,161) and (487,157.87) .. (487,154) -- cycle ;
\draw  [color={rgb, 255:red, 80; green, 227; blue, 194 }  ,draw opacity=1 ][fill={rgb, 255:red, 80; green, 227; blue, 194 }  ,fill opacity=1 ] (449,151) .. controls (449,147.13) and (452.13,144) .. (456,144) .. controls (459.87,144) and (463,147.13) .. (463,151) .. controls (463,154.87) and (459.87,158) .. (456,158) .. controls (452.13,158) and (449,154.87) .. (449,151) -- cycle ;
\draw  [color={rgb, 255:red, 80; green, 227; blue, 194 }  ,draw opacity=1 ][fill={rgb, 255:red, 80; green, 227; blue, 194 }  ,fill opacity=1 ] (453,183) .. controls (453,179.13) and (456.13,176) .. (460,176) .. controls (463.87,176) and (467,179.13) .. (467,183) .. controls (467,186.87) and (463.87,190) .. (460,190) .. controls (456.13,190) and (453,186.87) .. (453,183) -- cycle ;
\draw  [color={rgb, 255:red, 80; green, 227; blue, 194 }  ,draw opacity=1 ][fill={rgb, 255:red, 80; green, 227; blue, 194 }  ,fill opacity=1 ] (308,186) .. controls (308,182.13) and (311.13,179) .. (315,179) .. controls (318.87,179) and (322,182.13) .. (322,186) .. controls (322,189.87) and (318.87,193) .. (315,193) .. controls (311.13,193) and (308,189.87) .. (308,186) -- cycle ;
\draw  [color={rgb, 255:red, 74; green, 74; blue, 74 }  ,draw opacity=1 ] (449,269) .. controls (449,252.43) and (462.43,239) .. (479,239) .. controls (495.57,239) and (509,252.43) .. (509,269) .. controls (509,285.57) and (495.57,299) .. (479,299) .. controls (462.43,299) and (449,285.57) .. (449,269) -- cycle ;
\draw  [color={rgb, 255:red, 74; green, 74; blue, 74 }  ,draw opacity=1 ] (372,271) .. controls (372,254.43) and (385.43,241) .. (402,241) .. controls (418.57,241) and (432,254.43) .. (432,271) .. controls (432,287.57) and (418.57,301) .. (402,301) .. controls (385.43,301) and (372,287.57) .. (372,271) -- cycle ;
\draw  [color={rgb, 255:red, 74; green, 74; blue, 74 }  ,draw opacity=1 ] (303,271) .. controls (303,254.43) and (316.43,241) .. (333,241) .. controls (349.57,241) and (363,254.43) .. (363,271) .. controls (363,287.57) and (349.57,301) .. (333,301) .. controls (316.43,301) and (303,287.57) .. (303,271) -- cycle ;
\draw  [color={rgb, 255:red, 74; green, 74; blue, 74 }  ,draw opacity=1 ] (168,272) .. controls (168,255.43) and (181.43,242) .. (198,242) .. controls (214.57,242) and (228,255.43) .. (228,272) .. controls (228,288.57) and (214.57,302) .. (198,302) .. controls (181.43,302) and (168,288.57) .. (168,272) -- cycle ;
\draw  [color={rgb, 255:red, 0; green, 0; blue, 0 }  ,draw opacity=1 ][fill={rgb, 255:red, 213; green, 45; blue, 66 }  ,fill opacity=1 ] (170,155) .. controls (170,151.13) and (173.13,148) .. (177,148) .. controls (180.87,148) and (184,151.13) .. (184,155) .. controls (184,158.87) and (180.87,162) .. (177,162) .. controls (173.13,162) and (170,158.87) .. (170,155) -- cycle ;
\draw  [color={rgb, 255:red, 0; green, 0; blue, 0 }  ,draw opacity=1 ][fill={rgb, 255:red, 213; green, 45; blue, 66 }  ,fill opacity=1 ] (250,183) .. controls (250,179.13) and (253.13,176) .. (257,176) .. controls (260.87,176) and (264,179.13) .. (264,183) .. controls (264,186.87) and (260.87,190) .. (257,190) .. controls (253.13,190) and (250,186.87) .. (250,183) -- cycle ;
\draw  [color={rgb, 255:red, 0; green, 0; blue, 0 }  ,draw opacity=1 ][fill={rgb, 255:red, 213; green, 45; blue, 66 }  ,fill opacity=1 ] (222,174) .. controls (222,170.13) and (225.13,167) .. (229,167) .. controls (232.87,167) and (236,170.13) .. (236,174) .. controls (236,177.87) and (232.87,181) .. (229,181) .. controls (225.13,181) and (222,177.87) .. (222,174) -- cycle ;
\draw  [color={rgb, 255:red, 0; green, 0; blue, 0 }  ,draw opacity=1 ][fill={rgb, 255:red, 213; green, 45; blue, 66 }  ,fill opacity=1 ] (139,185) .. controls (139,181.13) and (142.13,178) .. (146,178) .. controls (149.87,178) and (153,181.13) .. (153,185) .. controls (153,188.87) and (149.87,192) .. (146,192) .. controls (142.13,192) and (139,188.87) .. (139,185) -- cycle ;
\draw  [color={rgb, 255:red, 0; green, 0; blue, 0 }  ,draw opacity=1 ][fill={rgb, 255:red, 213; green, 45; blue, 66 }  ,fill opacity=1 ] (310,152) .. controls (310,148.13) and (313.13,145) .. (317,145) .. controls (320.87,145) and (324,148.13) .. (324,152) .. controls (324,155.87) and (320.87,159) .. (317,159) .. controls (313.13,159) and (310,155.87) .. (310,152) -- cycle ;
\draw  [color={rgb, 255:red, 80; green, 227; blue, 194 }  ,draw opacity=1 ][fill={rgb, 255:red, 80; green, 227; blue, 194 }  ,fill opacity=1 ] (176,188) .. controls (176,184.13) and (179.13,181) .. (183,181) .. controls (186.87,181) and (190,184.13) .. (190,188) .. controls (190,191.87) and (186.87,195) .. (183,195) .. controls (179.13,195) and (176,191.87) .. (176,188) -- cycle ;
\draw  [color={rgb, 255:red, 80; green, 227; blue, 194 }  ,draw opacity=1 ][fill={rgb, 255:red, 80; green, 227; blue, 194 }  ,fill opacity=1 ] (334,185) .. controls (334,181.13) and (337.13,178) .. (341,178) .. controls (344.87,178) and (348,181.13) .. (348,185) .. controls (348,188.87) and (344.87,192) .. (341,192) .. controls (337.13,192) and (334,188.87) .. (334,185) -- cycle ;
\draw  [color={rgb, 255:red, 0; green, 0; blue, 0 }  ,draw opacity=1 ][fill={rgb, 255:red, 213; green, 45; blue, 66 }  ,fill opacity=1 ] (325,167) .. controls (325,163.13) and (328.13,160) .. (332,160) .. controls (335.87,160) and (339,163.13) .. (339,167) .. controls (339,170.87) and (335.87,174) .. (332,174) .. controls (328.13,174) and (325,170.87) .. (325,167) -- cycle ;
\draw  [color={rgb, 255:red, 0; green, 0; blue, 0 }  ,draw opacity=1 ][fill={rgb, 255:red, 213; green, 45; blue, 66 }  ,fill opacity=1 ] (377,153) .. controls (377,149.13) and (380.13,146) .. (384,146) .. controls (387.87,146) and (391,149.13) .. (391,153) .. controls (391,156.87) and (387.87,160) .. (384,160) .. controls (380.13,160) and (377,156.87) .. (377,153) -- cycle ;
\draw  [color={rgb, 255:red, 0; green, 0; blue, 0 }  ,draw opacity=1 ][fill={rgb, 255:red, 213; green, 45; blue, 66 }  ,fill opacity=1 ] (490,184) .. controls (490,180.13) and (493.13,177) .. (497,177) .. controls (500.87,177) and (504,180.13) .. (504,184) .. controls (504,187.87) and (500.87,191) .. (497,191) .. controls (493.13,191) and (490,187.87) .. (490,184) -- cycle ;
\draw  [color={rgb, 255:red, 0; green, 0; blue, 0 }  ,draw opacity=1 ][fill={rgb, 255:red, 213; green, 45; blue, 66 }  ,fill opacity=1 ] (465,169) .. controls (465,165.13) and (468.13,162) .. (472,162) .. controls (475.87,162) and (479,165.13) .. (479,169) .. controls (479,172.87) and (475.87,176) .. (472,176) .. controls (468.13,176) and (465,172.87) .. (465,169) -- cycle ;
\draw  [color={rgb, 255:red, 0; green, 0; blue, 0 }  ,draw opacity=1 ][fill={rgb, 255:red, 213; green, 45; blue, 66 }  ,fill opacity=1 ] (402,187) .. controls (402,183.13) and (405.13,180) .. (409,180) .. controls (412.87,180) and (416,183.13) .. (416,187) .. controls (416,190.87) and (412.87,194) .. (409,194) .. controls (405.13,194) and (402,190.87) .. (402,187) -- cycle ;
\draw  [color={rgb, 255:red, 80; green, 227; blue, 194 }  ,draw opacity=1 ][fill={rgb, 255:red, 80; green, 227; blue, 194 }  ,fill opacity=1 ] (378,185) .. controls (378,181.13) and (381.13,178) .. (385,178) .. controls (388.87,178) and (392,181.13) .. (392,185) .. controls (392,188.87) and (388.87,192) .. (385,192) .. controls (381.13,192) and (378,188.87) .. (378,185) -- cycle ;
\draw  [color={rgb, 255:red, 0; green, 0; blue, 0 }  ,draw opacity=1 ][fill={rgb, 255:red, 213; green, 45; blue, 66 }  ,fill opacity=1 ] (322,253) .. controls (322,249.13) and (325.13,246) .. (329,246) .. controls (332.87,246) and (336,249.13) .. (336,253) .. controls (336,256.87) and (332.87,260) .. (329,260) .. controls (325.13,260) and (322,256.87) .. (322,253) -- cycle ;
\draw  [color={rgb, 255:red, 0; green, 0; blue, 0 }  ,draw opacity=1 ][fill={rgb, 255:red, 213; green, 45; blue, 66 }  ,fill opacity=1 ] (174,271) .. controls (174,267.13) and (177.13,264) .. (181,264) .. controls (184.87,264) and (188,267.13) .. (188,271) .. controls (188,274.87) and (184.87,278) .. (181,278) .. controls (177.13,278) and (174,274.87) .. (174,271) -- cycle ;
\draw  [color={rgb, 255:red, 0; green, 0; blue, 0 }  ,draw opacity=1 ][fill={rgb, 255:red, 213; green, 45; blue, 66 }  ,fill opacity=1 ] (397,289) .. controls (397,285.13) and (400.13,282) .. (404,282) .. controls (407.87,282) and (411,285.13) .. (411,289) .. controls (411,292.87) and (407.87,296) .. (404,296) .. controls (400.13,296) and (397,292.87) .. (397,289) -- cycle ;
\draw  [color={rgb, 255:red, 0; green, 0; blue, 0 }  ,draw opacity=1 ][fill={rgb, 255:red, 213; green, 45; blue, 66 }  ,fill opacity=1 ] (378,278) .. controls (378,274.13) and (381.13,271) .. (385,271) .. controls (388.87,271) and (392,274.13) .. (392,278) .. controls (392,281.87) and (388.87,285) .. (385,285) .. controls (381.13,285) and (378,281.87) .. (378,278) -- cycle ;
\draw  [color={rgb, 255:red, 0; green, 0; blue, 0 }  ,draw opacity=1 ][fill={rgb, 255:red, 213; green, 45; blue, 66 }  ,fill opacity=1 ] (327,289) .. controls (327,285.13) and (330.13,282) .. (334,282) .. controls (337.87,282) and (341,285.13) .. (341,289) .. controls (341,292.87) and (337.87,296) .. (334,296) .. controls (330.13,296) and (327,292.87) .. (327,289) -- cycle ;
\draw  [color={rgb, 255:red, 0; green, 0; blue, 0 }  ,draw opacity=1 ][fill={rgb, 255:red, 213; green, 45; blue, 66 }  ,fill opacity=1 ] (455,274) .. controls (455,270.13) and (458.13,267) .. (462,267) .. controls (465.87,267) and (469,270.13) .. (469,274) .. controls (469,277.87) and (465.87,281) .. (462,281) .. controls (458.13,281) and (455,277.87) .. (455,274) -- cycle ;
\draw  [color={rgb, 255:red, 0; green, 0; blue, 0 }  ,draw opacity=1 ][fill={rgb, 255:red, 213; green, 45; blue, 66 }  ,fill opacity=1 ] (488,270) .. controls (488,266.13) and (491.13,263) .. (495,263) .. controls (498.87,263) and (502,266.13) .. (502,270) .. controls (502,273.87) and (498.87,277) .. (495,277) .. controls (491.13,277) and (488,273.87) .. (488,270) -- cycle ;
\draw  [color={rgb, 255:red, 0; green, 0; blue, 0 }  ,draw opacity=1 ][fill={rgb, 255:red, 213; green, 45; blue, 66 }  ,fill opacity=1 ] (182,287) .. controls (182,283.13) and (185.13,280) .. (189,280) .. controls (192.87,280) and (196,283.13) .. (196,287) .. controls (196,290.87) and (192.87,294) .. (189,294) .. controls (185.13,294) and (182,290.87) .. (182,287) -- cycle ;
\draw  [color={rgb, 255:red, 0; green, 0; blue, 0 }  ,draw opacity=1 ][fill={rgb, 255:red, 213; green, 45; blue, 66 }  ,fill opacity=1 ] (398,254) .. controls (398,250.13) and (401.13,247) .. (405,247) .. controls (408.87,247) and (412,250.13) .. (412,254) .. controls (412,257.87) and (408.87,261) .. (405,261) .. controls (401.13,261) and (398,257.87) .. (398,254) -- cycle ;
\draw  [color={rgb, 255:red, 80; green, 227; blue, 194 }  ,draw opacity=1 ][fill={rgb, 255:red, 80; green, 227; blue, 194 }  ,fill opacity=1 ] (310,280) .. controls (310,276.13) and (313.13,273) .. (317,273) .. controls (320.87,273) and (324,276.13) .. (324,280) .. controls (324,283.87) and (320.87,287) .. (317,287) .. controls (313.13,287) and (310,283.87) .. (310,280) -- cycle ;
\draw  [color={rgb, 255:red, 80; green, 227; blue, 194 }  ,draw opacity=1 ][fill={rgb, 255:red, 80; green, 227; blue, 194 }  ,fill opacity=1 ] (342,263) .. controls (342,259.13) and (345.13,256) .. (349,256) .. controls (352.87,256) and (356,259.13) .. (356,263) .. controls (356,266.87) and (352.87,270) .. (349,270) .. controls (345.13,270) and (342,266.87) .. (342,263) -- cycle ;
\draw  [color={rgb, 255:red, 80; green, 227; blue, 194 }  ,draw opacity=1 ][fill={rgb, 255:red, 80; green, 227; blue, 194 }  ,fill opacity=1 ] (188,256) .. controls (188,252.13) and (191.13,249) .. (195,249) .. controls (198.87,249) and (202,252.13) .. (202,256) .. controls (202,259.87) and (198.87,263) .. (195,263) .. controls (191.13,263) and (188,259.87) .. (188,256) -- cycle ;
\draw  [color={rgb, 255:red, 80; green, 227; blue, 194 }  ,draw opacity=1 ][fill={rgb, 255:red, 80; green, 227; blue, 194 }  ,fill opacity=1 ] (207,265) .. controls (207,261.13) and (210.13,258) .. (214,258) .. controls (217.87,258) and (221,261.13) .. (221,265) .. controls (221,268.87) and (217.87,272) .. (214,272) .. controls (210.13,272) and (207,268.87) .. (207,265) -- cycle ;
\draw  [color={rgb, 255:red, 80; green, 227; blue, 194 }  ,draw opacity=1 ][fill={rgb, 255:red, 80; green, 227; blue, 194 }  ,fill opacity=1 ] (206,284) .. controls (206,280.13) and (209.13,277) .. (213,277) .. controls (216.87,277) and (220,280.13) .. (220,284) .. controls (220,287.87) and (216.87,291) .. (213,291) .. controls (209.13,291) and (206,287.87) .. (206,284) -- cycle ;
\draw  [color={rgb, 255:red, 80; green, 227; blue, 194 }  ,draw opacity=1 ][fill={rgb, 255:red, 80; green, 227; blue, 194 }  ,fill opacity=1 ] (380,259) .. controls (380,255.13) and (383.13,252) .. (387,252) .. controls (390.87,252) and (394,255.13) .. (394,259) .. controls (394,262.87) and (390.87,266) .. (387,266) .. controls (383.13,266) and (380,262.87) .. (380,259) -- cycle ;
\draw  [color={rgb, 255:red, 80; green, 227; blue, 194 }  ,draw opacity=1 ][fill={rgb, 255:red, 80; green, 227; blue, 194 }  ,fill opacity=1 ] (413,277) .. controls (413,273.13) and (416.13,270) .. (420,270) .. controls (423.87,270) and (427,273.13) .. (427,277) .. controls (427,280.87) and (423.87,284) .. (420,284) .. controls (416.13,284) and (413,280.87) .. (413,277) -- cycle ;
\draw  [color={rgb, 255:red, 80; green, 227; blue, 194 }  ,draw opacity=1 ][fill={rgb, 255:red, 80; green, 227; blue, 194 }  ,fill opacity=1 ] (474,286) .. controls (474,282.13) and (477.13,279) .. (481,279) .. controls (484.87,279) and (488,282.13) .. (488,286) .. controls (488,289.87) and (484.87,293) .. (481,293) .. controls (477.13,293) and (474,289.87) .. (474,286) -- cycle ;
\draw  [color={rgb, 255:red, 80; green, 227; blue, 194 }  ,draw opacity=1 ][fill={rgb, 255:red, 80; green, 227; blue, 194 }  ,fill opacity=1 ] (480,253) .. controls (480,249.13) and (483.13,246) .. (487,246) .. controls (490.87,246) and (494,249.13) .. (494,253) .. controls (494,256.87) and (490.87,260) .. (487,260) .. controls (483.13,260) and (480,256.87) .. (480,253) -- cycle ;
\draw  [color={rgb, 255:red, 80; green, 227; blue, 194 }  ,draw opacity=1 ][fill={rgb, 255:red, 80; green, 227; blue, 194 }  ,fill opacity=1 ] (460,255) .. controls (460,251.13) and (463.13,248) .. (467,248) .. controls (470.87,248) and (474,251.13) .. (474,255) .. controls (474,258.87) and (470.87,262) .. (467,262) .. controls (463.13,262) and (460,258.87) .. (460,255) -- cycle ;
\draw   (177.81,337.44) -- (221,337.44) -- (221,373.18) -- (177.81,373.18) -- cycle ;
\draw   (177,401.78) -- (220.19,401.78) -- (220.19,440.22) -- (177,440.22) -- cycle ;
\draw   (177.81,463.45) -- (221,463.45) -- (221,501.88) -- (177.81,501.88) -- cycle ;
\draw    (198.19,401.19) -- (198.19,374.59) ;
\draw [shift={(198.19,372.59)}, rotate = 450] [color={rgb, 255:red, 0; green, 0; blue, 0 }  ][line width=0.75]    (10.93,-3.29) .. controls (6.95,-1.4) and (3.31,-0.3) .. (0,0) .. controls (3.31,0.3) and (6.95,1.4) .. (10.93,3.29)   ;
\draw    (199.81,462.86) -- (199.81,442.51) ;
\draw [shift={(199.81,440.51)}, rotate = 450] [color={rgb, 255:red, 0; green, 0; blue, 0 }  ][line width=0.75]    (10.93,-3.29) .. controls (6.95,-1.4) and (3.31,-0.3) .. (0,0) .. controls (3.31,0.3) and (6.95,1.4) .. (10.93,3.29)   ;
\draw    (199.81,523.33) -- (199.81,504.78) ;
\draw [shift={(199.81,502.78)}, rotate = 450] [color={rgb, 255:red, 0; green, 0; blue, 0 }  ][line width=0.75]    (10.93,-3.29) .. controls (6.95,-1.4) and (3.31,-0.3) .. (0,0) .. controls (3.31,0.3) and (6.95,1.4) .. (10.93,3.29)   ;
\draw    (199.81,337.73) -- (199.81,320.67) ;
\draw [shift={(199.81,318.67)}, rotate = 450] [color={rgb, 255:red, 0; green, 0; blue, 0 }  ][line width=0.75]    (10.93,-3.29) .. controls (6.95,-1.4) and (3.31,-0.3) .. (0,0) .. controls (3.31,0.3) and (6.95,1.4) .. (10.93,3.29)   ;
\draw   (320.57,337.26) -- (351,337.26) -- (351,366.12) -- (320.57,366.12) -- cycle ;
\draw   (320,389.21) -- (350.43,389.21) -- (350.43,420.23) -- (320,420.23) -- cycle ;
\draw   (320.57,438.99) -- (351,438.99) -- (351,470.02) -- (320.57,470.02) -- cycle ;
\draw    (334.93,388.73) -- (334.93,367.64) ;
\draw [shift={(334.93,365.64)}, rotate = 450] [color={rgb, 255:red, 0; green, 0; blue, 0 }  ][line width=0.75]    (10.93,-3.29) .. controls (6.95,-1.4) and (3.31,-0.3) .. (0,0) .. controls (3.31,0.3) and (6.95,1.4) .. (10.93,3.29)   ;
\draw    (336.07,438.51) -- (336.07,422.47) ;
\draw [shift={(336.07,420.47)}, rotate = 450] [color={rgb, 255:red, 0; green, 0; blue, 0 }  ][line width=0.75]    (10.93,-3.29) .. controls (6.95,-1.4) and (3.31,-0.3) .. (0,0) .. controls (3.31,0.3) and (6.95,1.4) .. (10.93,3.29)   ;
\draw    (336.07,487.33) -- (336.07,472.74) ;
\draw [shift={(336.07,470.74)}, rotate = 450] [color={rgb, 255:red, 0; green, 0; blue, 0 }  ][line width=0.75]    (10.93,-3.29) .. controls (6.95,-1.4) and (3.31,-0.3) .. (0,0) .. controls (3.31,0.3) and (6.95,1.4) .. (10.93,3.29)   ;
\draw    (335.5,337.26) -- (335,322.67) ;
\draw [shift={(334.93,320.67)}, rotate = 448.02] [color={rgb, 255:red, 0; green, 0; blue, 0 }  ][line width=0.75]    (10.93,-3.29) .. controls (6.95,-1.4) and (3.31,-0.3) .. (0,0) .. controls (3.31,0.3) and (6.95,1.4) .. (10.93,3.29)   ;
\draw   (388.57,338.26) -- (419,338.26) -- (419,367.12) -- (388.57,367.12) -- cycle ;
\draw   (388,390.21) -- (418.43,390.21) -- (418.43,421.23) -- (388,421.23) -- cycle ;
\draw   (388.57,439.99) -- (419,439.99) -- (419,471.02) -- (388.57,471.02) -- cycle ;
\draw    (402.93,389.73) -- (402.93,368.64) ;
\draw [shift={(402.93,366.64)}, rotate = 450] [color={rgb, 255:red, 0; green, 0; blue, 0 }  ][line width=0.75]    (10.93,-3.29) .. controls (6.95,-1.4) and (3.31,-0.3) .. (0,0) .. controls (3.31,0.3) and (6.95,1.4) .. (10.93,3.29)   ;
\draw    (404.07,439.51) -- (404.07,423.47) ;
\draw [shift={(404.07,421.47)}, rotate = 450] [color={rgb, 255:red, 0; green, 0; blue, 0 }  ][line width=0.75]    (10.93,-3.29) .. controls (6.95,-1.4) and (3.31,-0.3) .. (0,0) .. controls (3.31,0.3) and (6.95,1.4) .. (10.93,3.29)   ;
\draw    (404.07,488.33) -- (404.07,473.74) ;
\draw [shift={(404.07,471.74)}, rotate = 450] [color={rgb, 255:red, 0; green, 0; blue, 0 }  ][line width=0.75]    (10.93,-3.29) .. controls (6.95,-1.4) and (3.31,-0.3) .. (0,0) .. controls (3.31,0.3) and (6.95,1.4) .. (10.93,3.29)   ;
\draw    (403.5,338.26) -- (403,323.67) ;
\draw [shift={(402.93,321.67)}, rotate = 448.02] [color={rgb, 255:red, 0; green, 0; blue, 0 }  ][line width=0.75]    (10.93,-3.29) .. controls (6.95,-1.4) and (3.31,-0.3) .. (0,0) .. controls (3.31,0.3) and (6.95,1.4) .. (10.93,3.29)   ;
\draw   (459.57,337.26) -- (490,337.26) -- (490,366.12) -- (459.57,366.12) -- cycle ;
\draw   (459,389.21) -- (489.43,389.21) -- (489.43,420.23) -- (459,420.23) -- cycle ;
\draw   (459.57,438.99) -- (490,438.99) -- (490,470.02) -- (459.57,470.02) -- cycle ;
\draw    (473.93,388.73) -- (473.93,367.64) ;
\draw [shift={(473.93,365.64)}, rotate = 450] [color={rgb, 255:red, 0; green, 0; blue, 0 }  ][line width=0.75]    (10.93,-3.29) .. controls (6.95,-1.4) and (3.31,-0.3) .. (0,0) .. controls (3.31,0.3) and (6.95,1.4) .. (10.93,3.29)   ;
\draw    (475.07,438.51) -- (475.07,422.47) ;
\draw [shift={(475.07,420.47)}, rotate = 450] [color={rgb, 255:red, 0; green, 0; blue, 0 }  ][line width=0.75]    (10.93,-3.29) .. controls (6.95,-1.4) and (3.31,-0.3) .. (0,0) .. controls (3.31,0.3) and (6.95,1.4) .. (10.93,3.29)   ;
\draw    (475.07,487.33) -- (475.07,472.74) ;
\draw [shift={(475.07,470.74)}, rotate = 450] [color={rgb, 255:red, 0; green, 0; blue, 0 }  ][line width=0.75]    (10.93,-3.29) .. controls (6.95,-1.4) and (3.31,-0.3) .. (0,0) .. controls (3.31,0.3) and (6.95,1.4) .. (10.93,3.29)   ;
\draw    (474.5,337.26) -- (474,322.67) ;
\draw [shift={(473.93,320.67)}, rotate = 448.02] [color={rgb, 255:red, 0; green, 0; blue, 0 }  ][line width=0.75]    (10.93,-3.29) .. controls (6.95,-1.4) and (3.31,-0.3) .. (0,0) .. controls (3.31,0.3) and (6.95,1.4) .. (10.93,3.29)   ;
\draw  [color={rgb, 255:red, 80; green, 227; blue, 194 }  ,draw opacity=1 ][fill={rgb, 255:red, 80; green, 227; blue, 194 }  ,fill opacity=1 ] (164,593) .. controls (164,589.13) and (167.13,586) .. (171,586) .. controls (174.87,586) and (178,589.13) .. (178,593) .. controls (178,596.87) and (174.87,600) .. (171,600) .. controls (167.13,600) and (164,596.87) .. (164,593) -- cycle ;
\draw  [color={rgb, 255:red, 0; green, 0; blue, 0 }  ,draw opacity=1 ][fill={rgb, 255:red, 213; green, 45; blue, 66 }  ,fill opacity=1 ] (322,593) .. controls (322,589.13) and (325.13,586) .. (329,586) .. controls (332.87,586) and (336,589.13) .. (336,593) .. controls (336,596.87) and (332.87,600) .. (329,600) .. controls (325.13,600) and (322,596.87) .. (322,593) -- cycle ;
\draw    (322,68.33) .. controls (334.74,98.71) and (389.74,113.73) .. (411.7,137.85) ;
\draw [shift={(413,139.33)}, rotate = 229.97] [color={rgb, 255:red, 0; green, 0; blue, 0 }  ][line width=0.75]    (10.93,-3.29) .. controls (6.95,-1.4) and (3.31,-0.3) .. (0,0) .. controls (3.31,0.3) and (6.95,1.4) .. (10.93,3.29)   ;
\draw   (604.72,38.93) -- (602.91,184.92) -- (543.51,184.18) .. controls (544.64,92.94) and (523,110.92) .. (537.76,38.1) -- cycle ;
\draw    (300,68.33) .. controls (283.34,102.63) and (229.22,123.49) .. (201.65,140.31) ;
\draw [shift={(200,141.33)}, rotate = 327.8] [color={rgb, 255:red, 0; green, 0; blue, 0 }  ][line width=0.75]    (10.93,-3.29) .. controls (6.95,-1.4) and (3.31,-0.3) .. (0,0) .. controls (3.31,0.3) and (6.95,1.4) .. (10.93,3.29)   ;
\draw   (36.79,185.06) -- (38.29,39.07) -- (97.69,39.68) .. controls (96.75,130.93) and (118.36,112.9) .. (103.75,185.75) -- cycle ;
\draw    (103,101.67) .. controls (142.8,71.82) and (193.49,88.5) .. (258.03,108.37) ;
\draw [shift={(259,108.67)}, rotate = 197.1] [color={rgb, 255:red, 0; green, 0; blue, 0 }  ][line width=0.75]    (10.93,-3.29) .. controls (6.95,-1.4) and (3.31,-0.3) .. (0,0) .. controls (3.31,0.3) and (6.95,1.4) .. (10.93,3.29)   ;
\draw    (534,90.33) .. controls (460.74,74.49) and (427.66,107.66) .. (371.7,110.26) ;
\draw [shift={(370,110.33)}, rotate = 357.99] [color={rgb, 255:red, 0; green, 0; blue, 0 }  ][line width=0.75]    (10.93,-3.29) .. controls (6.95,-1.4) and (3.31,-0.3) .. (0,0) .. controls (3.31,0.3) and (6.95,1.4) .. (10.93,3.29)   ;

\draw (311,501.33) node [anchor=north west][inner sep=0.75pt]   [align=left] {Shard 1};
\draw (448,501.33) node [anchor=north west][inner sep=0.75pt]   [align=left] {Shard n};
\draw (165,532.67) node [anchor=north west][inner sep=0.75pt]   [align=left] {Beacon Chain};
\draw (392,499.67) node [anchor=north west][inner sep=0.75pt]   [align=left] {. . . .};
\draw (190,585.67) node [anchor=north west][inner sep=0.75pt]   [align=left] {Honest Node};
\draw (349,586.67) node [anchor=north west][inner sep=0.75pt]   [align=left] {Byzantine Node};
\draw (602.32,43.9) node [anchor=north west][inner sep=0.75pt]  [rotate=-90.71] [align=left] {Users (nodes) stake\\to be validators to \\the Beacon chain};
\draw (39.18,180.09) node [anchor=north west][inner sep=0.75pt]  [rotate=-270.59] [align=left] {Users (nodes) stake\\to be validators to \\the shards chains};
\end{tikzpicture}
\caption{A scheme of a sharding-based PoS and pBFT Blockchain protocol}
\label{fig:sharding_scheme}
\end{figure*}

\begin{lemma} \label{lemma 1}
The probability of a shard's committee members committing a faulty block ($P$) can be described as follows: 
\begin{equation} 
    P(X \geq \frac{2n}{3} ) = \sum_{k = \frac{2n}{3} }^{n} \frac{ \binom{M}{k}
    \binom{H}{n-k}} {\binom{V}{n} }
\end{equation} 
\end{lemma}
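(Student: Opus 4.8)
The plan is to recognize $X$ as a hypergeometric random variable and then read off the pBFT failure threshold. First I would fix a shard with $V = H + M$ validators, of which $H$ are honest and $M$ are malicious, and observe that by the definition of a shard's committee given above, the $n$ committee members are drawn uniformly at random, \emph{without replacement}, from this pool of $V$ validators (the draw being performed by the beacon chain's randomness). Consequently, for each feasible $k$ the event $\{X = k\}$ amounts to choosing exactly $k$ of the $M$ malicious validators together with $n-k$ of the $H$ honest ones, so a direct counting argument gives $P(X = k) = \binom{M}{k}\binom{H}{n-k}/\binom{V}{n}$, the standard hypergeometric pmf; here one tacitly uses $\binom{M}{k} = 0$ for $k > M$ and $\binom{H}{n-k} = 0$ for $n-k > H$ so that infeasible values contribute nothing.

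Next I would invoke the pBFT safety bound recalled in the introduction: with $3m+1$ participants, consensus is reached when at most $m$ are faulty, i.e.\ agreement requires strictly more than two-thirds of the members. Hence a committee of size $n$ can be driven to commit a \emph{faulty} block precisely when the malicious members control at least a two-thirds quorum, that is when $X \ge 2n/3$ (one takes $\lceil 2n/3 \rceil$ when $3 \nmid n$; for the clean statement above one may simply assume $n$ is a multiple of $3$). Therefore the probability that the shard's committee commits a faulty block is exactly $P(X \ge 2n/3)$.

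Finally, since the events $\{X = k\}$ for $k = 2n/3, \dots, n$ are pairwise disjoint and their union is $\{X \ge 2n/3\}$, finite additivity of probability yields
\[
P\!\left(X \ge \tfrac{2n}{3}\right) \;=\; \sum_{k = 2n/3}^{n} P(X = k) \;=\; \sum_{k = 2n/3}^{n} \frac{\binom{M}{k}\binom{H}{n-k}}{\binom{V}{n}},
\]
which is the claimed identity.

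The real subtlety here is modeling rather than computation: one must justify that the beacon chain's committee assignment genuinely induces the uniform-without-replacement distribution, so that the hypergeometric model is exact rather than an approximation (if validators were instead sampled i.i.d., $X$ would be binomial, not hypergeometric), and one must fix the convention for the quorum threshold when $n$ is not divisible by $3$. Once those modeling choices are pinned down, the summation identity itself is immediate from the definition of the hypergeometric distribution and additivity, so I expect no analytic obstacle beyond bookkeeping of degenerate binomial coefficients.
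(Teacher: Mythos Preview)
Your proposal is correct and follows exactly the approach the paper takes: the paper simply states that the proof of this lemma is ``a direct result from the cumulative hypergeometric distribution'' and cites prior work, which is precisely your argument of modeling the committee draw as uniform sampling without replacement so that $X$ is hypergeometric, and then summing the pmf above the pBFT $2n/3$ threshold. You have in fact written out the details (the counting for the pmf, the additivity step, the handling of degenerate binomials and the $3\nmid n$ convention) that the paper leaves implicit, but the underlying idea is identical.
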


\begin{lemma} \label{lemma 2}
The probability of at least $\frac{2}{3}$ of all shards committees committing a faulty block ($P^{'}$) can be computed as follows: 
\begin{equation} 
\sum_{i = \frac{2 \zeta}{3} }^{\zeta} \bigg( P(X \geq \frac{2n}{3}) \bigg)^{i} = \sum_{i = \frac{2 \zeta}{3} }^{\zeta} \sum_{\alpha = \frac{2n}{3} }^{n} \Bigg( \frac{ \binom{M}{\alpha} \binom{H}{n- \alpha}} {\binom{V}{n}} \Bigg)^{i}
\end{equation}
\end{lemma}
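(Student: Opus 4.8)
The plan is to build directly on Lemma~\ref{lemma 1}, which already supplies the probability $P = P(X \ge \tfrac{2n}{3})$ that the committee of a single shard commits a faulty block, and then to pass from one shard to the ensemble of $\zeta$ shards. First I would pin down the probabilistic model: because the beacon chain re-samples every committee independently at the start of an epoch, the events ``shard $j$ commits a faulty block'', for $j = 1,\dots,\zeta$, are taken to be mutually independent, each occurring with probability $P$. Consequently, for any fixed collection of $i$ shards, the probability that all of them commit a faulty block in the same epoch is $P^{i}$.

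Next I would interpret the event ``at least $\tfrac{2}{3}$ of all shard committees commit a faulty block'' as a disjunction over the number $i$ of offending shards, with $i$ running from $\tfrac{2\zeta}{3}$ up to $\zeta$. Assigning to each value of $i$ the weight $\big(P(X \ge \tfrac{2n}{3})\big)^{i}$ and summing yields the left-hand side of the claimed identity. The right-hand side then follows by a pure substitution step: inside each $i$-th power, replace $P(X \ge \tfrac{2n}{3})$ by its closed form $\sum_{\alpha = 2n/3}^{n} \binom{M}{\alpha}\binom{H}{n-\alpha}/\binom{V}{n}$ from Lemma~\ref{lemma 1}, and relabel the inner index as $\alpha$. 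Throughout, I would record the divisibility conventions ($3 \mid n$ and $3 \mid \zeta$) so that the summation bounds $\tfrac{2n}{3}$ and $\tfrac{2\zeta}{3}$ are genuine integers.

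The delicate point --- and the one I expect to be the real obstacle --- is the cross-shard step rather than the algebra. One must be explicit that the shard outcomes are being modelled as independent and identically distributed, and one must state precisely what the $i$-th term is counting: whether ``at least $\tfrac{2\zeta}{3}$ shards are faulty'' is accounted for by summing the probabilities of the nested ``at least $i$'' events, or whether a binomial coefficient $\binom{\zeta}{i}$ and a complementary factor $(1-P)^{\zeta-i}$ ought to appear. The stated formula corresponds to a deliberately conservative accounting of these configurations, so the proof should present that as an explicit modelling choice --- and, ideally, remark that it over-estimates the true probability --- rather than attempt to derive the exact binomial expression.
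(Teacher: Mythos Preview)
Your proposal follows essentially the same route as the paper: take the single-shard probability $P = P(X \ge \tfrac{2n}{3})$ from Lemma~\ref{lemma 1}, assign $P^{i}$ as the contribution when $i$ shard committees agree, and sum over $i$ from $\tfrac{2\zeta}{3}$ to $\zeta$. If anything you are more explicit than the paper itself, which simply declares $P^{i}$ to be ``the probability of exactly $i$ shards' committees agreeing'' and sums without comment; your framing of the missing $\binom{\zeta}{i}(1-P)^{\zeta-i}$ factors as a modelling choice is the right way to present it (though your aside that the formula necessarily \emph{over}-estimates the exact binomial tail is not true in general and should be dropped).
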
 

\begin{proof}
The minimum of all shards' committees to commit a faulty block is $\frac{2 \zeta}{3}$, where $\zeta$ is the number of shards. The probability of exactly $\frac{2 \zeta}{3}$ all shards' committees confirm/agree to add a faulty block can be summarized and expressed as follows:
\begin{equation}
P_{\frac{2 \zeta}{3}} = \bigg( P(X \geq \frac{2n}{3}) \bigg)^{\frac{2 \zeta}{3}}   
\end{equation}

And the probability to commit a faulty block by exactly $\frac{2 \zeta}{3} + 1$ of all shards' committees can be described as follows.
\begin{equation}
P_{\frac{2 \zeta}{3} + 1} = \bigg( P(X \geq \frac{2n}{3}) \bigg)^{\frac{2 \zeta}{3} + 1}   
\end{equation}

Similarly, the probability of exactly $\zeta$ of all shards' committees (the entire number of shards in this case) agree to add a faulty block can be expressed as follows:

\begin{equation}
P_{\zeta} = \bigg( P(X \geq \frac{2n}{3}) \bigg)^{\zeta}   
\end{equation}

A faulty block can be committed if $\frac{2 \zeta}{3}$ or $\frac{2 \zeta}{3} + 1$ or $\frac{2 \zeta}{3} + 2$ or, $\cdots$, or $\zeta$ of all shards' committees agree to add this block. This can be mathematically computed by the sum over all these probabilities and can be expressed and summarized as follows:
\begin{equation}
P^{'} = P_{\frac{2 \zeta}{3}} + P_{\frac{2 \zeta}{3} + 1} + \cdots + P_{\zeta} 
\end{equation}
\end{proof} 

\begin{lemma} \label{lemma 3}
The probability of the beacon's committee members committing a faulty block ($P^{''}$) can be expressed as follows: 
\begin{equation} 
    P(X' \geq \frac{2n'}{3} ) = \sum_{j = \frac{2 n'}{3}  }^{n'} \frac{ \binom{M'}{j}
    \binom{H'}{n' - j}} {\binom{V'}{n'}}
\end{equation}
\end{lemma}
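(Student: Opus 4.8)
The plan is to reproduce the derivation behind Lemma~\ref{lemma 1}: for the purpose of committing a faulty block the beacon chain behaves exactly like one more shard, its committee being a uniformly random subset of size $n'$ drawn from the $V' = H' + M'$ validators who staked for the beacon chain, of whom $H'$ are honest and $M'$ are malicious. So I expect the argument to be a near-verbatim copy of the one for a single shard's committee, with the primed quantities in place of the unprimed ones.

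First I would pin down the sampling model. Since the $n'$ beacon-committee members are chosen without replacement and without regard to whether a validator is honest or Byzantine, the count $X'$ of malicious members is hypergeometrically distributed with parameters $(V', M', n')$; thus for every integer $j$ with $\max(0,\, n'-H') \le j \le \min(n',\, M')$ one has
\begin{equation}
P(X' = j) = \frac{\binom{M'}{j}\binom{H'}{n'-j}}{\binom{V'}{n'}},
\end{equation}
while all other values of $j$ contribute nothing because the relevant binomial coefficients vanish, so the summation range may be written simply as $j$ running from $2n'/3$ to $n'$.

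Second, I would identify the event ``the beacon committee commits a faulty block'' with $\{X' \ge 2n'/3\}$. This is precisely the pBFT safety threshold recalled in the introduction: a committee of $n'$ replicas can be coerced into agreeing on an invalid value only once strictly more than two thirds of its members are Byzantine, i.e.\ at least $\lceil 2n'/3 \rceil$ of them, which under the standing convention that $n'$ is divisible by $3$ is just $2n'/3$. Summing the hypergeometric mass function over this event then yields
\begin{equation}
P(X' \ge \tfrac{2n'}{3}) = \sum_{j = 2n'/3}^{n'} \frac{\binom{M'}{j}\binom{H'}{n'-j}}{\binom{V'}{n'}},
\end{equation}
which is the asserted expression for $P''$.

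The algebra is a single summation, so the main obstacle is not computational but modelling: one must argue that the beacon committee is assigned uniformly at random and independently of validator type (this is what the protocol's randomness beacon is meant to guarantee), since that is exactly what makes the hypergeometric law \emph{exact} rather than merely approximate, and one must fix the $\lceil 2n'/3 \rceil$ convention when $3 \nmid n'$. Once those points are granted, Lemma~\ref{lemma 3} follows immediately; indeed it is nothing but Lemma~\ref{lemma 1} with $(n, H, M, V)$ replaced by $(n', H', M', V')$.
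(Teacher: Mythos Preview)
Your proposal is correct and follows the same approach as the paper: the paper's justification for Lemma~\ref{lemma 3} (and Lemma~\ref{lemma 1}) is simply that it is a direct consequence of the cumulative hypergeometric distribution, which is exactly what you derive, only with more care about the sampling model and the $2n'/3$ threshold. If anything, you supply the details the paper omits; there is no substantive difference in strategy.
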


Proofs of Lemma \ref{lemma 1} and \ref{lemma 2} are a direct results from the cumulative hypergeometric distribution \cite{hafid2020novel, hafid2021tractable}.

\begin{theorem}[Committing a Faulty Block] \label{theorem 1}
The probability of committing a faulty block for a given shard can be expressed as follows: 
\begin{equation} 
    \mathcal{P} = \sum_{k = \frac{2n}{3} }^{n} \sum_{i = \frac{2 \zeta}{3} }^{\zeta} \sum_{\alpha = \frac{2n}{3} }^{n} \sum_{j = \frac{2 n'}{3}  }^{n'} \frac{ \binom{M}{k} \binom{H}{n-k} \binom{M}{\alpha}^{i} \binom{H}{n - \alpha}^{i} \binom{M'}{j} \binom{H'}{n' - j}  } {\binom{V}{n} \binom{H}{n- \alpha}^{i} \binom{V'}{n'}}
\end{equation}
\end{theorem}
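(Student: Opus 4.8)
The plan is to obtain $\mathcal{P}$ as the product of the three probabilities supplied by Lemmas~\ref{lemma 1}, \ref{lemma 2} and \ref{lemma 3}. The guiding observation is that, for a faulty block to actually be \emph{committed} in a designated shard — i.e.\ to be recorded on the beacon chain instead of being dropped and its producing committee slashed — three events must occur at the same time. Let $E_1$ be the event that the designated shard's own committee agrees on the faulty block; by Lemma~\ref{lemma 1}, $\Pr[E_1] = P = \sum_{k=2n/3}^{n}\binom{M}{k}\binom{H}{n-k}\big/\binom{V}{n}$. Let $E_2$ be the event that the faulty block cannot be overturned by the cross-shard slashing vote, which in the protocol requires at least $2\zeta/3$ of the $\zeta$ shard committees to be adversarially controlled; by Lemma~\ref{lemma 2}, $\Pr[E_2] = P' = \sum_{i=2\zeta/3}^{\zeta}\bigl(P(X\ge 2n/3)\bigr)^{i}$. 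Let $E_3$ be the event that the beacon committee fails to reject the block, i.e.\ is itself adversarially controlled; by Lemma~\ref{lemma 3}, $\Pr[E_3] = P'' = \sum_{j=2n'/3}^{n'}\binom{M'}{j}\binom{H'}{n'-j}\big/\binom{V'}{n'}$.

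Next I would invoke the model's independence structure for $E_1$, $E_2$ and $E_3$. Committees are formed by uniform random sampling of validators driven by the epoch's beacon randomness, the adversarial fraction $p_m$ is common to the shard validator pool and to the beacon validator pool, and — under the i.i.d.\ per-shard sampling assumption already used to derive Lemma~\ref{lemma 2} — the corruption status of the designated shard, the count of corrupted shards among all $\zeta$, and the corruption status of the beacon committee are treated as mutually independent. Consequently $\mathcal{P} = \Pr[E_1\cap E_2\cap E_3] = \Pr[E_1]\,\Pr[E_2]\,\Pr[E_3] = P\cdot P'\cdot P''$.

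It then remains to substitute the closed forms and expand. Plugging the expressions of Lemmas~\ref{lemma 1}--\ref{lemma 3} into $P\cdot P'\cdot P''$ produces a product of three finite sums; distributing the product while assigning a distinct summation index to each factor (the index $k$ for $P$, the pair $(i,\alpha)$ for $P'$, and the index $j$ for $P''$), writing the $i$-th power $\bigl(P(X\ge 2n/3)\bigr)^{i}$ with the factor $\binom{H}{n-\alpha}^{i}$ displayed in both numerator and denominator exactly as in the theorem statement, and collecting everything over the common denominator $\binom{V}{n}\binom{V'}{n'}$, collapses the whole expression into the single quadruple sum claimed. I would also flag the standing divisibility conventions that make the summation limits well defined, namely $3\mid n$, $3\mid n'$ and $3\mid\zeta$ (otherwise the lower bounds are read as $\lceil 2n/3\rceil$, $\lceil 2n'/3\rceil$, $\lceil 2\zeta/3\rceil$).

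The genuinely delicate part is not the algebra but the modeling step: one has to pin down precisely what "committing a faulty block" means operationally, argue that $E_1$, $E_2$, $E_3$ are the necessary (and, in the model, sufficient) conditions, and justify their independence — in particular whether the $2\zeta/3$ threshold in $E_2$ is counted inclusive or exclusive of the designated shard, and whether the beacon event is conditionally independent of the shard events given the epoch randomness. Once the factorization $\mathcal{P}=P\cdot P'\cdot P''$ is granted, the rest is a routine rearrangement of finite sums of binomial coefficients.
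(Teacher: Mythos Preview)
Your proposal is correct and follows essentially the same approach as the paper: identify the three events (the shard's own committee, at least $2\zeta/3$ of all shard committees, and the beacon committee each being adversarially controlled), and take $\mathcal{P}$ to be the product $P\cdot P'\cdot P''$ of the probabilities from Lemmas~\ref{lemma 1}, \ref{lemma 2}, and \ref{lemma 3}. In fact you are more explicit than the paper, which simply states that the three conditions must hold and that the result is the product of the three lemma probabilities without discussing independence or the divisibility conventions you flag.
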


\begin{proof}
To commit a faulty block, it must be confirmed/verified by at least $\frac{2}{3}$ of a shard's committee members and by at least $\frac{2}{3}$ of beacon's committee members, and finally by at least $\frac{2}{3}$ of all shards' committees. This can be identified mathematically by the product over the three probabilities (the calculated probabilities in Lemmas \ref{lemma 1}, \ref{lemma 2}, and \ref{lemma 3}). 
\end{proof}

\section{Results \& Evaluation} \label{sec: Results and Evaluation}
In Figures \ref{fig: Probability of a shard to commit a faulty block}, \ref{fig: Probability of all shards committing a faulty block}, and \ref{fig: Probability of the beacon chain to commit a faulty block}, we assume a network with N = 2000 users/nodes, V = 200, V'= 400, r = 0.333, R = 0.1, R = 0.2, and R = 0.3. The remaining parameters will be shown in Figures.

\begin{figure}[ht] 
\centering
\includegraphics[width= 0.5 \textwidth]{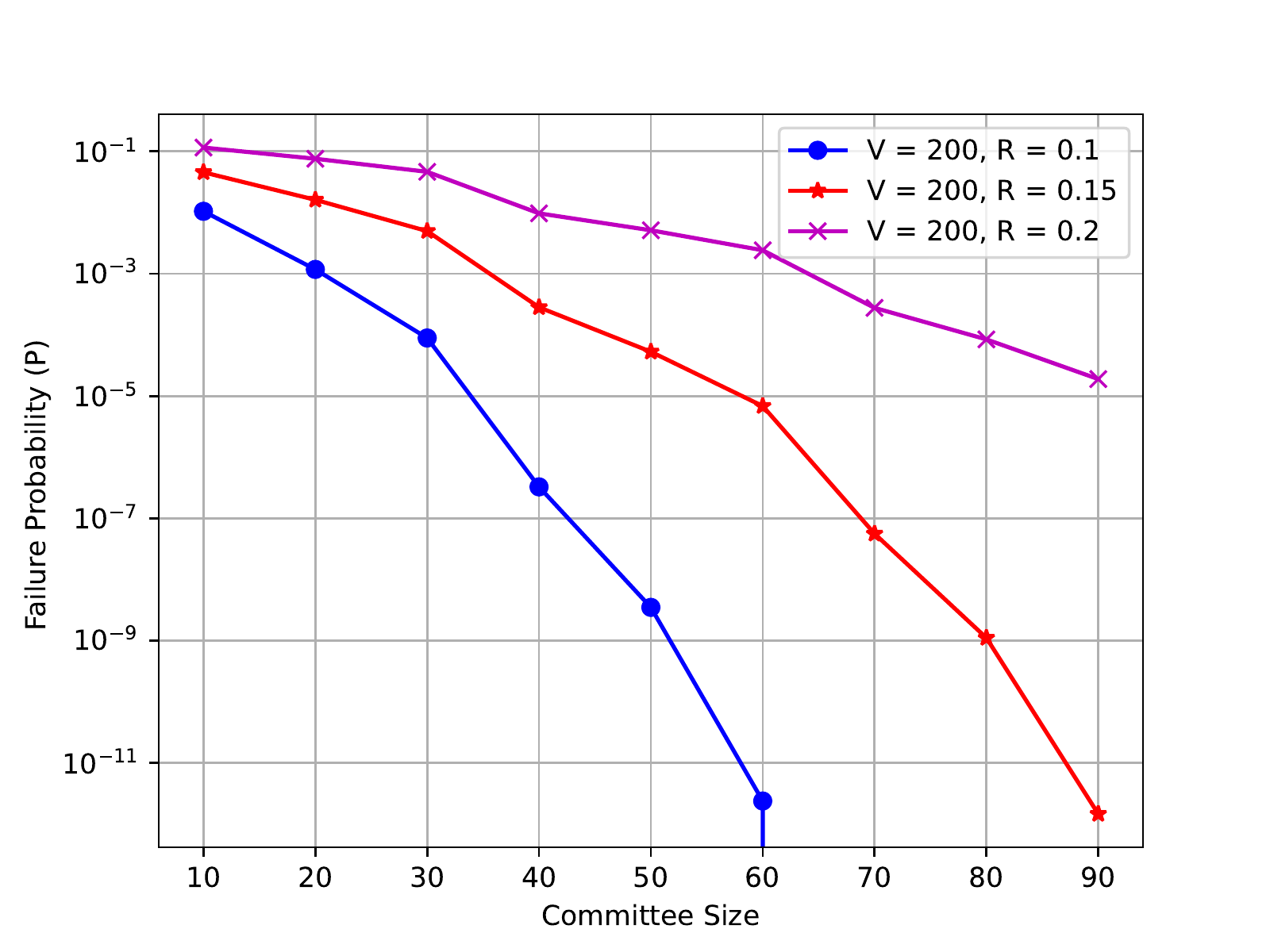}
\caption{Log-scale plot of the probability of a shard to commit a faulty block ($P$) versus the size of the committee (n)}
\label{fig: Probability of a shard to commit a faulty block}
\end{figure}

Figure \ref{fig: Probability of a shard to commit a faulty block} shows the probability of a shard to commit a faulty block versus the size of the committee in a network of 2000 nodes. We observe that the probability $P$ decreases when the size of the committee increases. More specifically, we observe that the probability corresponding to $R=0.1$ (i.e. 10\% of malicious nodes in each shards) decreases rapidly compared to those of $R=0.15$ and $R=0.2$; this can be explained by the small percentage of malicious nodes. In other words, as the percentage of malicious nodes gets smaller the probability decreases and vice versa.
 
\begin{figure}[ht] 
\centering
\includegraphics[width= 0.5 \textwidth]{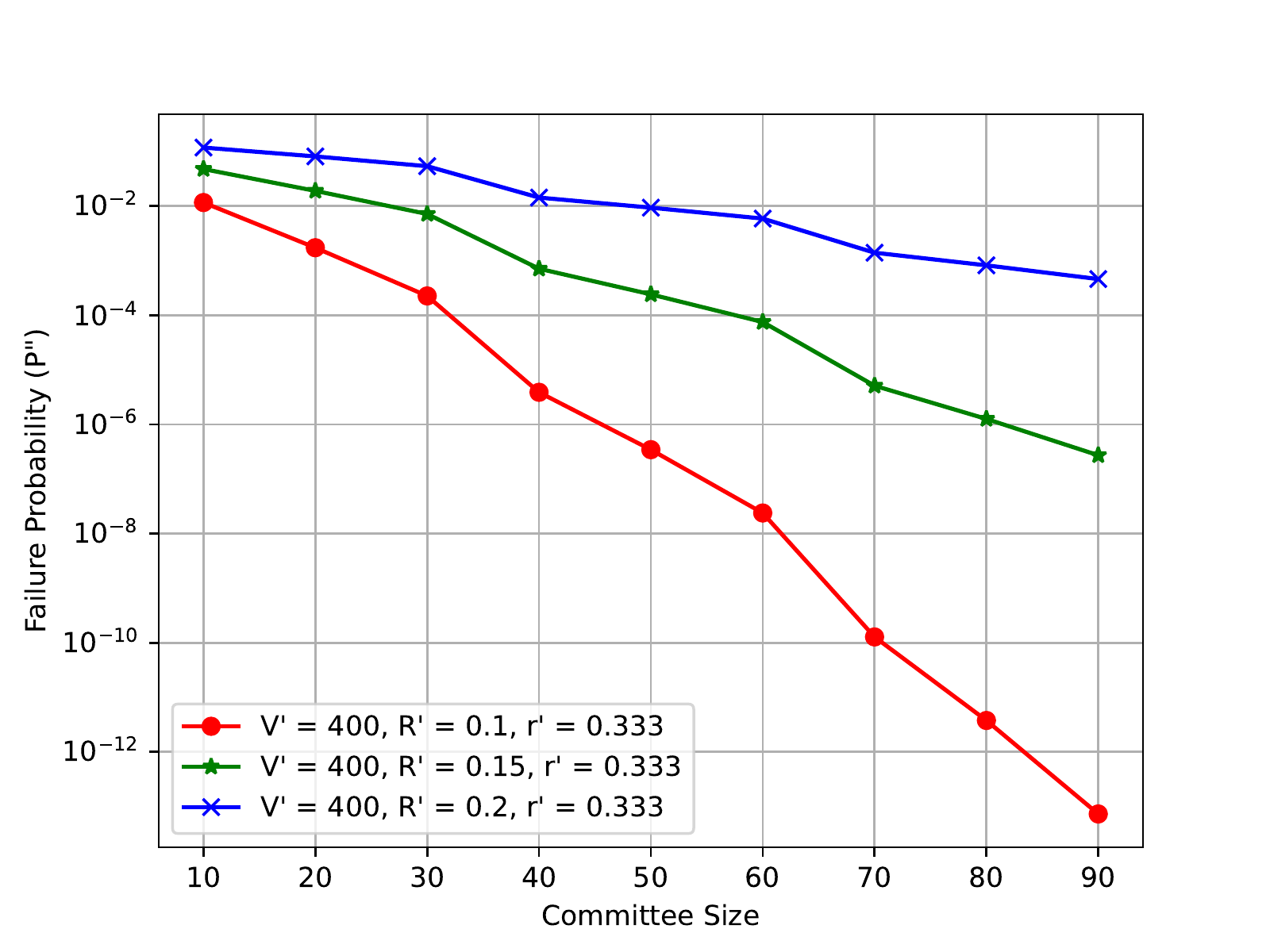}
\caption{Log-scale plot of the probability of all shards committing a faulty block ($P^{'}$) versus the size of the committee (n)}
\label{fig: Probability of the beacon chain to commit a faulty block}
\end{figure}

Figure \ref{fig: Probability of all shards committing a faulty block} shows the probability of all shards committing a faulty block versus the size of the committee. We observe that the probability $P^{'}$ decreases when the size of the committee increases.

\begin{figure}[ht] 
\centering
\includegraphics[width= 0.5 \textwidth]{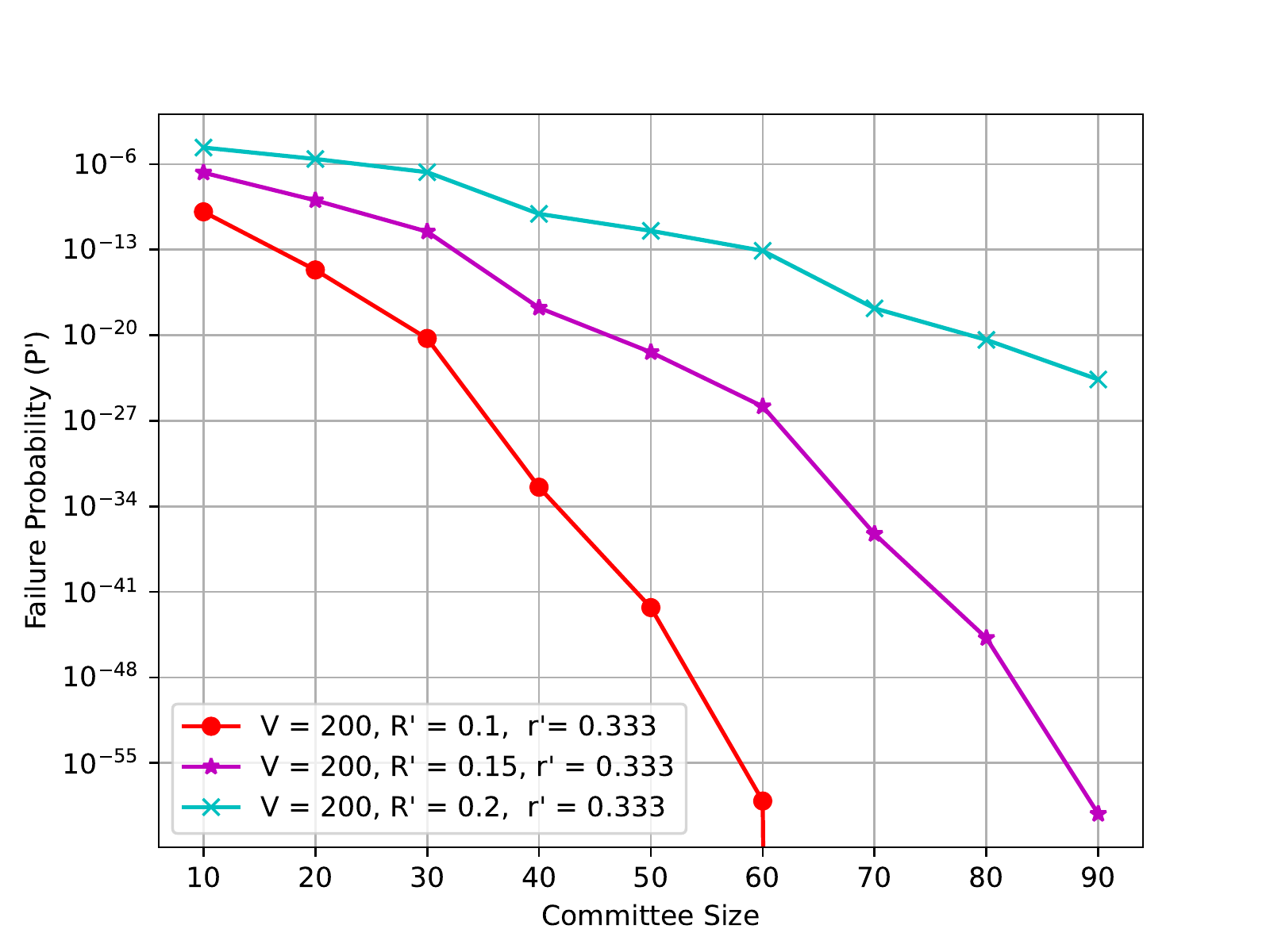}
\caption{Log-scale plot of the probability of the beacon chain to commit a faulty block ($P^{''}$) versus the size of the committee (n)}
\label{fig: Probability of all shards committing a faulty block}
\end{figure}

Figure \ref{fig: Probability of the beacon chain to commit a faulty block} shows the probability of the beacon chain to commit a faulty block versus the size of the committee. We observe that the probability $P^{''}$ decreases when the size of the committee increases. More specifically, we observe that the probability corresponding to $R=0.1$ (i.e. 10\% of malicious nodes in each shards) decreases sharply compared to those of $R=0.15$ and $R=0.2$.  


\begin{table}[ht]
\centering
\caption{Probability of conquering the protocol}
\setlength{\tabcolsep}{10pt}
\begin{tabular}{|c|c|c|c|c|} 
\hline
$p_m$ &
10 \% &
15 \% &
20 \% & 
30 \% \\
\hline
$\mathcal{P}^{\mathrm{a}}$ &
3.63E-66 &
2.10E-34 &
1.58E-18 & 
1.70E-04 \\
\hline
$Y_f^{\mathrm{a}}$ &
7.56E+62 &
1.30E+31 &
1.74E+17 & 
16.12 \\
\hline
$\mathcal{P}^{\mathrm{b}}$ &
 0.0 &
5.14E-80 &
2.01E-41 & 
5.30E-07 \\
\hline
$Y_f^{\mathrm{b}}$ &
inf &
5.33E+76 &
1.36E+38 & 
5171.32\\
\hline
\multicolumn{4}{l}{$^{\mathrm{a}}$Scenario 1; $^{\mathrm{b}}$Scenario 2.}
\end{tabular}
\label{tab: Probability of conquering the chain}
\end{table}

In Figure \ref{tab: Probability of conquering the chain}, we assume two scenarios to show the effectiveness and the feasibility of the proposed model: Scenario 1 proposes a network with $N = 2000$,  $\zeta = 8$, $V = 200$, $V'= 400$, and $r = r' = 0.333$ whereas Scenario 2 proposes a network with $N = 4000$,  $\zeta = 8$, $V = 400$, $V'= 800$, and $r = r' = 0.333$. It is noteworthy that the proposed model can be adopted to any scenario.

Table \ref{tab: Probability of conquering the chain} shows the probability of conquering the chain (i.e. the probability of committing a faulty block; it is calculated based on Theorem \ref{theorem 1}) for different percentage of malicious nodes in the shards as well as in the beacon chain. Moreover, Table \ref{tab: Probability of conquering the chain} shows the number of years to fail corresponding to these probabilities. More specifically, we observe that as the percentage of malicious nodes increases the number of years to fail decreases.

We observe that the probability of conquering the chain is extremely low even with 20\% of malicious nodes in each shard as well as in the beacon chain. This achieves a good security, which is about $1.74E+17$ years to fail.

To sum up, we conclude that by adjusting the size of the shard's committee as well as the size of the beacon's committee, we could support/protect the sharded Blockchain systems against malicious nodes (e.g. Sybil nodes).

\section{Conclusion} \label{sec: Conclusion}
This article addresses the security of sharding-based blockchain protocols that are based on PoS and pBFT consensuses. In particular, it provides a probabilistic model to compute the probability of committing a faulty block. Based on this probability, we compute the number of years to fail. Furthermore, this article depicts that we can control the number of years to fail by adjusting the size of the shard's committee as well as the size of the beacon's committee. Future works focus on computing the failure probability across-shard transaction.

\bibliographystyle{IEEEtran} 
\bibliography{Sharding}

\newpage

\end{document}